\renewcommand{\d}{\partial}
\newtheorem{proposition}{Proposition}
\newtheorem{lemma}{Lemma}
\newtheorem{theorem}{Theorem}
\newcommand{\gl}{M_N(\C)}
\renewcommand{\d}{\mathrm{d}}
\newcommand{\g}{\mathfrak{g}}
\newcommand\Zop{\mathbb{Z^{\mathrm{odd}}_+}}
\def\d{\partial}
\newcommand{\R}{\mathbb R}
\newcommand{\C}{\mathbb C}
    \newcommand{\Rmnum}[1]{\expandafter\@slowromancap\romannumeral #1@}
\def\({\left(}
\def\){\right)}
\def\[{\begin{eqnarray}}
\def\]{\end{eqnarray}}
\def\d{\partial}
\begin{document}

\title{Gauge transformation and symmetries  of the commutative multi-component BKP hierarchy}

\author{
\ Chuanzhong Li\\
 Department of Mathematics,  Ningbo University, Ningbo, 315211, China\\
Email:lichuanzhong@nbu.edu.cn\\
}

\date{}

\maketitle

\abstract{
In this paper, we defined a new multi-component BKP hierarchy which takes values in a commutative subalgebra  of $gl(N,\mathbb C)$. After this, we give the gauge transformation of this commutative multi-component BKP (CMBKP) hierarchy. Meanwhile we construct a new constrained CMBKP hierarchy which contains some new integrable systems including coupled  KdV equations under a certain reduction. After this, the quantum torus symmetry and quantum torus constraint on the tau function of the commutative multi-component BKP hierarchy will be constructed.
}\\

Mathematics Subject Classifications(2000).  37K05, 37K10, 37K20.\\
Keywords:   commutative multi-component BKP hierarchy, constrained commutative multi-component BKP hierarchy, gauge transformation, quantum torus symmetry, quantum torus constraint.\\

\tableofcontents

\section {Introduction}

The KP  and Toda lattice hierarchies
  as  completely integrable systems  have many important applications in mathematics and physics including the theory of Lie algebras' representation, orthogonal polynomials and  random
matrix model  \cite{bbkp,Toda,Todabook,UT,witten,dubrovin}. The KP hierarchy has many kinds of reduction or extension, for example the BKP, CKP hierarchies and so on. As  important sub-hierarchies of the KP hierarchy, the constrained KP (cKP) hierarchy, the constrained BKP  (cBKP) hierarchy and the constrained CKP(cCKP) hierarchy play an important role in the commutative integrable systems.

In \cite{tu}, the Virasoro symmetry and ASvM formula  of the BKP hierarchy were given.
In \cite{JHEP,gtf}, the gauge transformations of the BKP, CKP, constrained BKP and constrained CKP
hierarchies were constructed.
 In the paper \cite{BKP-DS}, we construct the generalized additional symmetries of the two-component BKP hierarchy
 and identify its algebraic structure.
 As a reduction of the two-component BKP hierarchy, the D type Drinfeld-Sokolov hierarchy was found to be a good differential model to derive a complete
 Block type infinite dimensional Lie algebra (also called Torus Lie algebra). About the Block algebra and its quantization (quantum torus algebra) related to integrable systems, we did a series of works in \cite{ourBlock}-\cite{torus}.
 In the paper \cite{NPB}, we  constructed the additional symmetries of the supersymmetric BKP hierarchy which constitute a B type $SW_{1+\infty}$ Lie algebra. Further we generalize the SBKP hierarchy to a supersymmetric two-component BKP hierarchy (S2BKP) hierarchy
and a new supersymmetric Drinfeld-Sokolov hierarchy of type D which  has a super Block type additional symmetry.

There is another kind of generalization of KP and Toda systems called multi-component KP \cite{kac,avanM} or multi-component Toda system which attracts more and more attention because of its widely use in many fields such as the fields of multiple orthogonal
polynomials and non-intersecting Brownian motions.
In \cite{commutBKP}, they considered a generalized multicomponent KP hierarchy which contains N independent generalized scalar KP hierarchies in particular by considering a commutative subalgebra of diagonal matrices. In \cite{spinor},  a formalism of multicomponent BKP hierarchies using elementary
geometry of spinors was developed by Kac and van de Leur. In \cite{ramond}, M. Ma\~{n}as, Luis Mart\'{i}nez Alonso  construct a relation between multicomponent BKP hierarchy and Lame
equations from Ramond fermions.
 The $\tau$ functions of a
$2N$-multicomponent KP hierarchy provide solutions of the $N$-multicomponent
two dimensional Toda hierarchy \cite{UT} which
 was considered from the point of view of the theory  of multiple matrix orthogonal polynomials, non-intersecting Brownian motions and matrix Riemann-Hilber problem \cite{manasInverse}-\cite{manas}. The multicomponent  Toda hierarchy in \cite{manasInverse} is a periodic reduction of bi-infinite matrix-formed  Toda hierarchy which contains matrix-formed Toda equation as the first flow equation. In \cite{EMTH}, we defined  the extended multi-component Toda hierarchy and its Sato theory.

In \cite{zuo}, a new hierarchy called as $Z_m$-KP hierarchy which take values in a maximal commutative subalgebra of $gl(m,\C)$ was constructed, meanwhile the relation between Frobenius manifold and dispersionless reduced $Z_m$-KP hierarchy was discussed. This inspired us to consider the Hirota quadratic equation of the commutative version of extended multi-component Toda hierarchy in \cite{EZTH} which might be useful in Frobenius manifold theory.

This paper is arranged as follows. In the next section we recall the factorization problem and construct the multicomponent $Z_N$-BKP hierarchy. In Section 2,
we will give the Lax equations of the commutative multicomponent BKP hierarchy.  In Section 3, multi-fold transformations of the CMBKP hierarchy will be constructed using the determinant technique in \cite{Hedeterminant,rogueHMB}. We construct a new constrained CMBKP hierarchy which contains some new integrable systems including a coupled commutative matrix $KdV$ equation in Section 4. In Section 5, the  quantum torus symmetry and quantum torus constraint on the tau function of the commutative multi-component BKP hierarchy will be constructed. Section 6 will be devoted to a short conclusions and discussions.

\section{ Lax equations of CMBKP hierarchy}

In this section we will use the factorization problem to derive  Lax equations.
We will consider the linear space of the complex $N\times N$
 matrix-valued function
$g:\R\rightarrow M_N(\C)$ with the derivative operator $\partial$.
Then the set $\g$ of Laurent
series in $\partial$ as an associative algebra is a
Lie algebra under the standard commutator.
This Lie algebra has the following important splitting
\begin{gather}\label{splitting}
\g=\g_+\oplus\g_-,
\end{gather}
where
\begin{align*}
  \g_+&=\Big\{\sum_{j\geq 0}X_j(x)\partial^j,\quad X_j(x)\in\gl\Big\},&
  \g_-&=\Big\{\sum_{j< 0}X_j(x)\partial^j,\quad X_j(x)\in\gl\Big\}.
\end{align*}

The splitting
\eqref{splitting} leads us to consider the following factorization of
$g\in G$
\begin{gather}\label{fac1}
g=g_-^{-1}\circ g_+, \quad g_\pm\in G_\pm
\end{gather}
where $G_\pm$ have $\g_\pm$ as their Lie algebras. $G_+$
is the set of invertible linear operators  of the
form $\sum_{j\geq 0}g_j(x)\partial^j$; while $G_-$ is the set of
invertible linear operators of the form
$1+\sum_{j<0}g_j(x)\partial^j$. This algebra has a maximal commutative subalgebra $Z_N=\C[\Gamma]/(\Gamma^N)$ and $\Gamma=(\delta_{i,j+1})_{ij}\in gl(N,\C).$
Denote $Z_N(\partial):=\g_c$, then
we have the following splitting
\begin{gather}\label{splittingc}
\g_c=\g_{c+}\oplus\g_{c-},
\end{gather}
where
\begin{align*}
  \g_{c+}&=\Big\{\sum_{j\geq 0}X_j(x)\partial^j,\quad X_j(x)\in Z_N\Big\},&
  \g_{c-}&=\Big\{\sum_{j< 0}X_j(x)\partial^j,\quad X_j(x)\in Z_N\Big\}.
\end{align*}

 We denote $``*"$ as a formal adjoint operation defined by $p^{*}=\sum(-1)^{i}\partial^{i}\circ p_i$ for an arbitrary
$Z_N$-valued pseudo-differential operator $p=\sum p_i\partial^{i}$, and $(fg)^{*}=g^{*}f^{*}$ for two operators $(f,g)$. Here $\circ$ means the multiplication of two operators.

Before the work, we list some identities, which will be used in the following sections:
\[A^{\ast}=A,\]
\[(AB)^{\ast}=BA,\]
\[(A\circ\partial\circ B)^{\ast}=-B\circ\partial\circ A,\]
where $A$ and $B$ are $N\times N$ $Z_N$-valued matrix functions.
The Lax operator of the   CMBKP hierarchy has  form
\begin{equation} \label{PhP}
L= \d+\sum_{i\ge1}u_i  \d^{-i},\end{equation}
where $u_i$ takes values in the commutative subalgebra $Z_N$.
And the operator must satisfy the following so-called B type condition
\[\label{laxBcondition}L^*=-\partial\circ L\circ\partial^{-1}.\]
The   CMBKP hierarchy is defined by the following
Lax equations:
\begin{align}\label{bkpLax}
& \d_{2k-1} L=[(B_{2k-1})_+,  L],\ \ B_{2k-1}=L^{2k-1}, \ \ k\geq 1.
\end{align}
One can write the operators $L$ in a dressing form as
\begin{equation} \label{PPh}
L=\Phi\circ \d\circ\Phi^{-1},
\end{equation}
where
\begin{align} \label{dreop}
\Phi=1+\sum_{i\ge 1}a_i \d^{-i},
\end{align}
 satisfy
\begin{equation}\label{phipsi}
\Phi^*=\d\circ\Phi^{-1}\circ\d^{-1}.
\end{equation}

We call eq.\eqref{phipsi} the B type condition of  the CMBKP hierarchy.
Given $L$, the dressing operators $\Phi$  are determined uniquely up to a multiplication to the
right by operators with
constant coefficients. The dressing operator $\Phi$ takes values in a B type commutative Volterra group in $G_-$. The   CMBKP  hierarchy \eqref{bkpLax}
can also be redefined as
\[
\frac{\d \Phi}{\d t_{2k-1}}=- (L^{2k-1})_-\circ\Phi,
\]
with $k\geq 1$.
In the the CMBKP hierarchy, we can derive an equation as following
\[\label{cmbkpeq}9v_{x,t_5}-5v_{t_3,t_3}+(v_{xxxxx}-5v_{xx,t_3}-15v_xv_{t_3}+15v_xv_{xxx}+15v_x^3)_x=0,\]
where $v=\int u_1 dx$ is in the $Z_N$ algebra.
We will call the eq.\eqref{cmbkpeq} the CMBKP equation.  When $N=2$, we can derive the following two-component CMBKP equation as
\[\label{cmbkpeq2}9w_{x,t_5}-5w_{t_3,t_3}+(w_{xxxxx}-5w_{xx,t_3}-15w_xw_{t_3}+15w_xw_{xxx}+15w_x^3)_x=0,\\
9z_{x,t_5}-5z_{t_3,t_3}+(z_{xxxxx}-5z_{xx,t_3}-15w_xz_{t_3}-15z_xw_{t_3}+15w_xz_{xxx}+15z_xw_{xxx}+45z_xw_x^2)_x=0,\]
where $v=w+z\Gamma.$
After freezing the $t_3$ flow, the CMBKP equation will be reduced to commutative two-component Sawada-Kotera(CMSK) equation as
\[\label{cmbkpeq2}9w_{x,t_5}+(w_{xxxxx}+15w_xw_{xxx}+15w_x^3)_x=0,\\
9z_{x,t_5}+(z_{xxxxx}+15w_xz_{xxx}+15z_xw_{xxx}+45z_xw_x^2)_x=0.\]
With the above preparation, it is time to construct gauge transformations for the CMBKP hierarchy in the next section.

\section{Gauge transformations of the CMBKP hierarchy}

In this section, we will consider the gauge transformation of the CMBKP hierarchy on the Lax operator

  \[\label{1darbouxL} L^{[1]}=\d+\sum_{i\ge1}U_i^{[1]}  \d^{-i}=W\circ L \circ W^{-1}, \]
where $W$ is the gauge transformation operator.
And $L^{[1]}$ should satisfy the B type condition
\[(L^{[1]})^*=-\partial \circ L^{[1]}\circ \partial^{-1},\]
which further implies
\[W^*=\partial\circ  W^{-1} \circ \partial^{-1}.\]

That means after the gauge transformation, the spectral problem about the $N\times N$ spectral matrix $\phi$ taking values in the commutative subalgebra $Z_N$ will preserve its form as

\[ L\cdot\phi=\lambda\phi,\ \ \
\frac{\d \phi}{\d t_n}=B_n\cdot\phi.\]

To keep the Lax pair of the CMBKP hierarchy invariant, i.e.,
  \begin{align}\label{bkpLax}
& \d_{t_n} L^{[1]}=[(B_n^{[1]})_+,  L^{[1]}],\ \ B_n^{[1]}=(L^{[1]})^{n}, \ \ n=1,3,5,\cdot \cdot \cdot,
\end{align}

the dressing operator $W$ should satisfy the following dressing equation
\begin{equation}\label{satisfy}
W_{t_{n}}=-W\circ (B_{n})_++(W\circ B_{n}\circ W^{-1})_+\circ W,\ \  n=1,3,5\cdot \cdot \cdot,
\end{equation}
where $W_{t_{n}}$ means the derivative of $W$ by $t_{n}.$

The evolutions of the eigenfunction $\phi$ and the adjoint eigenfunction $\psi$ of the CMBKP hierarchy are defined respectively by
\begin{equation}\label{rela}
\frac{\d \phi}{\d t_n}=B_n\cdot\phi, \frac{\d \psi}{\d t_n}=-(B_n)^*\cdot\psi,
\end{equation}
where $\phi=\phi(\lambda;t)$ and $\psi=\psi(\lambda;t)$ and $t=(t_1,t_3,t_5,...).$
To give the gauge transformation, we need the following lemma.

\begin{lemma}\label{lema}
The operator $B:=\sum_{n=0}^{\infty}b_n\partial^n(B:=\sum_{n=0}^{\infty}\partial^n\circ a_n)$ is a $Z_N$-valued differential operator and $f,g$ (short for $f(x),g(x)$) are two matrix functions taking values in the commutative subalgebra $Z_N$, following identities hold
\begin{equation}\label{Bneg}
(B\circ f \d^{-1}\circ  g)_-=(B\cdot f) \circ \d^{-1} \circ g,\ \ \ (f \d^{-1} \circ g\circ B)_-=f \d^{-1}\circ (B^*\cdot g).
\end{equation}
\end{lemma}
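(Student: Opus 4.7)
\emph{Proof plan.} The plan is to reduce both identities, by linearity in $B$, to the case of a single monomial and then apply the generalized Leibniz rule for pseudo-differential operators together with the commutativity of $Z_N$.

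For the first identity, I write $B = \sum_n b_n \d^n$ and consider a single summand $b_n \d^n$. The Leibniz expansion $\d^n \circ f = \sum_{k=0}^n \binom{n}{k} f^{(k)} \d^{n-k}$ gives $b_n \d^n \circ f\d^{-1} = \sum_{k=0}^{n-1} b_n \binom{n}{k} f^{(k)} \d^{n-k-1} + b_n f^{(n)} \d^{-1}$. The first sum is a differential operator, and composition with the function $g$ on the right keeps it differential, so it drops out of the negative part; only $b_n f^{(n)} \d^{-1}\circ g$ survives. Summing over $n$ immediately yields $(B\cdot f)\d^{-1}\circ g$, which is the first identity.

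For the second identity I would prefer the dual form $B = \sum_n \d^n \circ a_n$. Starting from $\d^{-1}\circ g = \sum_{k\ge 0}(-1)^k g^{(k)} \d^{-1-k}$, associativity yields $\d^{-1}\circ g\circ \d^n = \sum_{k\ge 0}(-1)^k g^{(k)} \d^{n-1-k}$. Composing on the right with $a_n$, the terms with $k \le n-1$ are differential and contribute nothing to the negative part, while the terms with $k \ge n$ (set $m = k-n$) are already purely negative. Each such term is then expanded via $\d^{-1-m}\circ a_n = \sum_{s\ge 0}(-1)^s\binom{m+s}{s} a_n^{(s)} \d^{-1-m-s}$; changing the summation index to $l = m+s$ produces an inner sum $\sum_{m=0}^l \binom{l}{m} g^{(m+n)} a_n^{(l-m)}$. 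By the ordinary Leibniz rule together with the commutativity of $Z_N$, this inner sum equals $(a_n g^{(n)})^{(l)}$, so the whole expression collapses to $(-1)^n \d^{-1}\circ(a_n g^{(n)}) = \d^{-1}\circ((\d^n\circ a_n)^*\cdot g)$. Summing over $n$ and restoring the factor $f$ on the left gives the second identity.

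The main obstacle is the resummation in the second identity, where a triple-indexed sum must be reorganized into a single convolution-type sum and the binomial coefficients $(-1)^s\binom{m+s}{s}$ must collapse back to the Leibniz coefficient $\binom{l}{m}$; this is precisely the step where commutativity of $Z_N$ plays an essential role, since otherwise one could not interchange $g^{(m+n)}$ and $a_n^{(l-m)}$ to recover a single Leibniz derivative of the product $a_n g^{(n)}$. The remaining steps are routine manipulations of the symbol calculus of pseudo-differential operators.
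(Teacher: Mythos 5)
Your argument is correct, but for the second identity it takes a genuinely different route from the paper. The paper does not prove the first identity at all (it is simply invoked), and it obtains the second identity by a duality argument: it writes $f\partial^{-1}\circ g\circ B$ as the adjoint of $-B^{*}\circ g\circ\partial^{-1}\circ f$, uses the fact that taking $(\cdot)_-$ commutes with $*$, applies the first identity to the adjointed expression, and then expands $((-B^{*}\cdot g)\circ\partial^{-1}\circ f)^{*}$ term by term to land on $f\partial^{-1}\circ(B^{*}\cdot g)$. You instead prove both identities directly from the symbol calculus, expanding $\partial^{-1}\circ g\circ\partial^{n}\circ a_{n}$, discarding the differential tail, and resumming the coefficients $(-1)^{s}\binom{m+s}{s}$ along $l=m+s$ back into a single Leibniz derivative $(a_{n}g^{(n)})^{(l)}$, which reassembles into $\partial^{-1}\circ((\partial^{n}\circ a_{n})^{*}\cdot g)$. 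Your computation checks out (including the sign bookkeeping $\binom{-1-m}{s}=(-1)^{s}\binom{m+s}{s}$), and you correctly isolate the one place where commutativity of $Z_{N}$ is indispensable: without it the resummation produces $g^{(n)}a_{n}$ rather than $a_{n}g^{(n)}$, which is exactly the obstruction that distinguishes the $Z_{N}$-valued hierarchy from the full $gl(N,\mathbb{C})$ case. The trade-off: the paper's adjoint trick is shorter and makes the second identity a formal corollary of the first, but it leaves the first identity unproved and quietly relies on $(P_-)^{*}=(P^{*})_-$; your version is longer but self-contained and makes the role of commutativity explicit.
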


\begin{proof}
Here we only give the proof of the second equation of \eqref{Bneg} by direct calculation basing on the first equation of \eqref{Bneg}

\[\notag
(f \d^{-1} \circ g\circ  B)_-&=&(-B^{\ast}\circ  g\circ \partial^{-1}\circ  f)^{\ast}_-\\ \notag
&=&((-B^{\ast}\cdot g)\circ \partial^{-1} \circ f)^{\ast}\\ \notag
&=&\sum_{m=0}^{\infty}(-a_m((-\partial)^{m}\cdot g)\partial^{-1}\circ  f)^{\ast}\\ \notag
&=&\sum_{m=0}^{\infty}(-1)^{m}f\partial^{-1}\circ (\partial^{m}\cdot g)^{\ast} \circ a_m\\ \notag
&=&\sum_{m=0}^{\infty}f \d^{-1}(-1)^{m}\circ a_m(\partial^{m}\cdot g ) \\
&=&f \d^{-1} \circ (B^*\cdot g).\]

\end{proof}
\begin{lemma}
The operators $T_D=\phi\circ\d\circ\phi^{-1}$ and $T_I=\psi^{-1}\circ\partial^{-1}\circ\psi$ satisfy eq.\eqref{satisfy} ,
which implies $T_D T_I=\phi\circ\d\circ\phi^{-1}\circ\psi^{-1}\circ\partial^{-1}\circ\psi$ can also satisfy eq.\eqref{satisfy}.
\end{lemma}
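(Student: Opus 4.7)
The plan is to verify the dressing equation (\ref{satisfy}) for each of $T_D$ and $T_I$ by direct computation, using Lemma~\ref{lema} to control the purely negative tails produced by conjugation, and then to combine the two verifications for the composite $T_DT_I$.

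For $T_D=\phi\circ\d\circ\phi^{-1}$, which equals the first-order differential operator $\d-\phi_x\phi^{-1}$, first split $B_n=(B_n)_++(B_n)_-$. An order count using $\ord T_D=1$ and $\ord T_D^{-1}=-1$ shows $T_D(B_n)_-T_D^{-1}$ has order at most $-1$, so $(T_DB_nT_D^{-1})_+=(T_D(B_n)_+T_D^{-1})_+$. Applying the first identity of Lemma~\ref{lema} with differential operator $B=T_D(B_n)_+$, $f=\phi$, $g=\phi^{-1}$ yields
\[
(T_DB_nT_D^{-1})_+ = T_D(B_n)_+T_D^{-1} - \bigl(T_D(B_n)_+\cdot\phi\bigr)\circ\d^{-1}\circ\phi^{-1}.
\]
Substituting this into the right-hand side of (\ref{satisfy}) and invoking the collapse $\d^{-1}\circ\phi^{-1}\circ T_D=\phi^{-1}$ reduces it to the multiplication operator $-(T_D(B_n)_+\cdot\phi)\phi^{-1}$. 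For the left-hand side, $(T_D)_{t_n}=-\partial_{t_n}(\phi_x\phi^{-1})$, and a Leibniz-rule computation with $\phi_{t_n}=B_n\cdot\phi$ rewrites this as $-\bigl((\d-\phi_x\phi^{-1})\cdot((B_n)_+\cdot\phi)\bigr)\phi^{-1}=-(T_D(B_n)_+\cdot\phi)\phi^{-1}$, matching the right-hand side.

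The argument for $T_I=\psi^{-1}\circ\d^{-1}\circ\psi$ is parallel but uses the second identity of Lemma~\ref{lema}: here $\ord T_I=-1$ and $\ord T_I^{-1}=1$, so again $T_I(B_n)_-T_I^{-1}$ has order at most $-1$ and is irrelevant to the positive projection. The negative tail of $T_I(B_n)_+T_I^{-1}$ is extracted by viewing it as $(\psi^{-1}\circ\d^{-1}\circ\psi)\circ B$ with $B=(B_n)_+\circ\psi^{-1}\circ\d\circ\psi$ a differential operator and applying the $(f\d^{-1}\circ g\circ B)_-=f\d^{-1}\circ(B^{\ast}\cdot g)$ rule; the adjoint identities $A^\ast=A$, $(AB)^\ast=BA$, $(A\circ\d\circ B)^\ast=-B\circ\d\circ A$ collected just before the lemma simplify $B^{\ast}\cdot\psi$. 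Combined with the adjoint evolution $\psi_{t_n}=-(B_n)^{\ast}\cdot\psi$, the analogous collapse identifies $(T_I)_{t_n}$ with the right-hand side of (\ref{satisfy}).

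For the composite, I would compute $(T_DT_I)_{t_n}=(T_D)_{t_n}T_I+T_D(T_I)_{t_n}$, insert the two verified formulas, and reorganize to match $-T_DT_I(B_n)_++(T_DT_IB_n(T_DT_I)^{-1})_+T_DT_I$. The main obstacle lies here: equation (\ref{satisfy}) is nonlinear in $W$, so a product of solutions is not automatically a solution, and the naive cross terms $(T_DB_nT_D^{-1})_+T_DT_I$ and $T_D(T_IB_nT_I^{-1})_+T_I$ must be reassembled into $(T_DT_IB_n(T_DT_I)^{-1})_+T_DT_I$ via Lemma~\ref{lema} applied twice, once to push the positive-part projection past $T_I^{-1}$ on the right and once past $T_D^{-1}$ on the left. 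The algebraic reason the telescoping succeeds is that $T_DT_I=1+O(\d^{-1})$ is a genuine order-zero dressing operator, realizing $T_DT_I$ as the elementary binary BKP Darboux transformation and so confirming (\ref{satisfy}) for the product.
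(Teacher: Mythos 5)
The paper states this lemma without any proof, so there is no ``official'' argument to compare against; judged on its own, your verification of \eqref{satisfy} for $T_D$ and for $T_I$ separately is correct and is the standard computation (order counting to discard $T(B_n)_-T^{-1}$ from the positive projection, Lemma~\ref{lema} to extract the residue of $T(B_n)_+T^{-1}$, and the evolutions \eqref{rela} to match the result with $T_{t_n}$). The collapse $\partial^{-1}\circ\phi^{-1}\circ T_D=\phi^{-1}$ and the final Leibniz computation are exactly right, and commutativity of $Z_N$ keeps the manipulations with $\phi^{-1}$, $\psi^{-1}$ harmless.

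The genuine gap is in the composite step. You correctly observe that \eqref{satisfy} is nonlinear in $W$, so a product of two solutions is not automatically a solution; but your proposed repair --- reassembling the cross terms ``via Lemma~\ref{lema} applied twice'' and appealing to $T_DT_I=1+O(\partial^{-1})$ --- does not close it. If $\phi$ and $\psi$ are eigenfunction and adjoint eigenfunction of the \emph{same} $L$, the chain rule leaves you comparing $-T_D(B_n)_+T_I+(T_DB_nT_D^{-1})_+T_DT_I+T_D(T_IB_nT_I^{-1})_+T_I$ with $(T_DT_IB_n(T_DT_I)^{-1})_+T_DT_I$, and these are not equal in general; being a dressing operator with leading term $1$ is not sufficient for \eqref{satisfy}. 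The correct (and simpler) argument is the composition law for gauge transformations: take $T_I$ to satisfy \eqref{satisfy} for $B_n$, and build $T_D$ from an eigenfunction of the once-transformed hierarchy, so that it satisfies \eqref{satisfy} with $B_n$ replaced by $B_n^{[1]}=T_IB_nT_I^{-1}$. Then in $(T_DT_I)_{t_n}=(T_D)_{t_n}T_I+T_D(T_I)_{t_n}$ the terms $-T_D(B_n^{[1]})_+T_I$ and $T_D(T_IB_nT_I^{-1})_+T_I$ cancel identically, and the remaining two terms are exactly $-T_DT_I(B_n)_++\bigl((T_DT_I)B_n(T_DT_I)^{-1}\bigr)_+T_DT_I$. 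No further use of Lemma~\ref{lema} is needed; what is needed, and what your writeup omits, is the stipulation that the generating function of the outer factor is taken from the transformed hierarchy --- which is precisely how the paper later assembles $W_1=T_I(\psi_1^{(1)})\circ T_D(\phi_1^{(0)})$ and its iterates $W_n=T_{n+n}$.
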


Now, we will find out the gauge transformation operator $W$ of the CMBKP hierarchy.
Firstly, we consider the two operators
\[
T_D(\phi)=\phi\circ\partial\circ\phi^{-1}, T_I(\psi)=\psi^{-1}\circ\partial^{-1}\circ\psi
,\]
where $\phi$ and $\psi$ are  $N\times N$ matrix-valued eigenfunctions taking values in the commutative subalgebra $Z_N$.
Then we have
\[
(T^{-1}_D(\phi))^{\ast}=-T_I(\phi), (T^{-1}_I(\psi))^{\ast}=-T_D(\psi).
\]
We can easily get
\[T_D(\phi)\cdot \phi=0, (T^{-1}_I(\psi))^{\ast}\cdot \psi=0.\]

Similarly to the reference \cite{Hedeterminant}, we can consider two sets of matrix functions
${\{\phi^{(0)}_i,i=1,2,...n;\phi^{(0)}\}}$ and ${\{\psi^{(0)}_i,i=1,2\cdot \cdot \cdot n;
\psi^{(0)}\}}.$
For {\bf$T_D(\phi)=\phi\circ\partial\circ\phi^{-1}$}, we do iterations by the following two steps.
For the first step, we consider:
\[T^{(1)}_D=T^{(1)}_D(\phi^{(0)}_1)=\phi^{(0)}_1\circ \partial\circ(\phi^{(0)}_1)^{-1},\]
we define the rule of transformation under $T^{(1)}_D$ as
\[\phi^{(1)}=T^{(1)}_D(\phi^{(0)}_1)\cdot \phi^{(0)}, \psi^{(1)}=(T^{(1)}_D(\phi^{(0)}_1))^{*^{-1}}\cdot \psi^{(0)}=-T_I(\phi^{(0)}_1)\cdot\psi^{(0)},\]
\[\phi^{(1)}_i=T^{(1)}_D(\phi^{(0)}_1)\cdot \phi^{(0)}_i, \psi^{(1)}_i=(T^{(1)}_D(\phi^{(0)}_1))^{*^{-1}}\cdot \psi^{(0)}_i=-T_I(\phi^{(0)}_1)\cdot\psi^{(0)}_i,\]
where $i\geqslant 2$ for $\phi^{(1)}_i$
and \[\psi^{(1)}_i=-T_I(\phi^{(0)}_1)\cdot(\psi^{(0)}_i).\]

For the second step, we consider:
\[T^{(2)}_D=T^{(2)}_D(\phi^{(1)}_2)=\phi^{(1)}_2\circ \partial\circ(\phi^{(1)}_2)^{-1},\]

we define the rule of transformation under $T^{(2)}_D$ as
\[\phi^{(2)}=T^{(2)}_D(\phi^{(1)}_2)\cdot \phi^{(1)}, \psi^{(2)}=(T^{(2)}_D(\phi^{(1)}_2))^{*^{-1}}\cdot \psi^{(1)}=-T_I((\phi^{(1)}_2))\cdot\psi^{(1)},\]
\[\phi^{(2)}_i=T^{(2)}_D(\phi^{(1)}_2)\cdot \phi^{(1)}_i, \psi^{(2)}_i=(T^{(2)}_D(\phi^{(1)}_2))^{*^{-1}}\cdot \psi^{(1)}_i=-T_I((\phi^{(1)}_2))\cdot\psi^{(1)}_i,\]
where $i\geqslant 3$ for $\phi^{(2)}_i$
and \[\psi^{(2)}_i=-T_I((\phi^{(1)}_2))\cdot(\psi^{(1)}_i).\]

For $T_I(\psi)=\psi^{-1}\circ\partial^{-1}\circ\psi$, it obeys  the following iterated rule:

For the first step, we consider:
\[T^{(1)}_I=T^{(1)}_I(\psi^{(0)}_1)=(\psi^{(0)}_1)^{-1}\circ \partial^{-1}\circ(\psi^{(0)}_1),\]

\[\phi^{(1)}=T^{(1)}_I(\psi^{(0)}_1)\cdot \phi^{(0)}, \psi^{(1)}=(T^{(1)}_I(\psi^{(0)}_1))^{*^{-1}}\cdot \psi^{(0)}=-T_D((\psi^{(0)}_1))\cdot\psi^{(0)},\]
\[\phi^{(1)}_i=T^{(1)}_I(\psi^{(0)}_1)\cdot \phi^{(0)}_i, \psi^{(1)}_i=(T^{(1)}_I(\psi^{(0)}_1))^{*^{-1}}\cdot \psi^{(0)}_i=-T_D((\psi^{(0)}_1))\cdot\psi^{(0)}_i,\]
where $i\geqslant 2$ for $\psi^{(1)}_i$
and \[\psi^{(1)}_i=-T_D((\psi^{(0)}_1))\cdot(\psi^{(0)}_i).\]

For the second step, we consider:
\[T^{(2)}_I=T^{(2)}_I(\psi^{(1)}_2)=(\psi^{(1)}_2)^{-1}\circ \partial^{-1}\circ(\psi^{(1)}_2),\]

\[\phi^{(2)}=T^{(2)}_I(\psi^{(1)}_2)\cdot \phi^{(1)}, \psi^{(2)}=(T^{(2)}_I(\psi^{(1)}_2))^{*^{-1}}\cdot \psi^{(1)}=-T_D(\psi^{(1)}_2)\cdot\psi^{(1)},\]
\[\phi^{(2)}_i=T^{(2)}_I(\psi^{(1)}_2)\cdot \phi^{(1)}_i, \psi^{(2)}_i=(T^{(2)}_I(\psi^{(1)}_2))^{*^{-1}}\cdot \psi^{(1)}_i=-T_D(\psi^{(1)}_2)\cdot\psi^{(1)}_i,\]
where $i\geqslant 3$ for $\psi^{(1)}_i$
and \[\psi^{(2)}_i=-T_D(\psi^{(1)}_2)\cdot(\psi^{(1)}_i).\]

It is obvious that a single step of the operator $T_D$ or $I_I$ can not keep the restriction of the B type condition,
we use
\[
W_1=T_{1+1}=T_I(\psi^{(1)}_1)\circ  T_D(\phi^{(0)}_1)
,\]
as the gauge transformation operator and we have $L^{[1]}=W_1LW^{-1}_1$. Let us check whether it satisfies the required constraint
\[(L^{[1]})^*=-\partial L^{[1]}\partial^{-1},\]
We can calculate
\[\notag
(L^{[1]})^*&=&((\psi^{(1)}_1)^{-1}\circ \partial^{-1}\circ\psi^{(1)}_1\circ\phi^{(0)}_1\circ\partial\circ(\phi^{(0)}_1)^{-1}\circ\\ \notag
&&L \circ\phi^{(0)}_1\circ\partial^{-1}\circ(\phi^{(0)}_1)^{-1}\circ(\psi^{(1)}_1)^{-1}\circ\partial\circ\psi^{(1)}_1)^{\ast}\\ \notag
&=&-(\psi^{(1)}_1)\circ \partial\circ((\psi^{(1)}_1))^{-1}(\phi^{(0)}_1)^{-1}\circ\partial^{-1}\circ(\phi^{(0)}_1)\circ\partial\circ L\\  &&\circ\partial^{-1}\circ(\phi^{(0)}_1)^{-1}\circ\partial\circ(\phi^{(0)}_1)\circ(\psi^{(1)}_1)\circ\partial^{-1}\circ((\psi^{(1)}_1))^{-1},\]
and
\[-\partial L^{[1]}\partial^{-1}=-\partial\circ((\psi^{(1)}_1))^{-1}\circ\partial^{-1}\circ\psi^{(1)}_1\circ\phi^{(0)}_1\circ\partial\circ(\phi^{(0)}_1)^{-1}\\ \notag
\circ L
\circ\phi^{(0)}_1\circ\partial^{-1}\circ(\phi^{(0)}_1)^{-1}\circ(\psi^{(1)}_1)^{-1}\circ\partial\circ\psi^{(1)}_1\circ\partial^{-1},\]
which means in order to keep the constraint $(L^{[1]})^*=-\partial L^{[1]}\partial^{-1}$, $T$ should satisfy the following equation:
\begin{equation}\label{opera}
T_D(\psi^{(1)}_1)T_I(\phi^{(0)}_1)\circ \partial=\partial \circ T_I(\psi^{(1)}_1)T_D(\phi^{(0)}_1),
\end{equation}
where $\psi^{(1)}_1=-(\phi^{(0)}_1)^{-1}\int(\phi^{(0)}_1)\psi^{(0)}_1$ and $\int$ means the integral about spatial variable $x$.

Then we can acquire the following theorem  because the CMBKP hierarchy  takes values in a commutative subalgebra just like the case when $N=1$, i.e. the case of the original BKP hierarchy..

\begin{theorem}
The B type condition of the CMBKP hierarchy implies $\psi^{(0)}_1$ and $\phi^{(0)}_1$ have the following relation:
 \[\label{reduc2}\psi^{(0)}_1=\phi^{(0)}_{1,x},\]
 \end{theorem}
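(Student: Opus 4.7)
The strategy is to exploit the commutativity of $Z_N$ to reduce the constraint \eqref{opera} to essentially a scalar BKP computation. Since $\phi_1^{(0)}$ and $\psi_1^{(0)}$ take values in $Z_N$ and hence commute with one another and with their inverses, all the non-trivial operator manipulations come exclusively from the Leibniz rule for $\partial$; there are no ordering obstructions among the functions themselves.

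First I would substitute the explicit formula $\psi_1^{(1)} = -(\phi_1^{(0)})^{-1} \int \phi_1^{(0)}\psi_1^{(0)} \, dx$ together with the definitions $T_D(f) = f\circ\partial\circ f^{-1}$ and $T_I(f) = f^{-1}\circ\partial^{-1}\circ f$ into \eqref{opera}. Both sides are then pseudo-differential operators of the general form $f\,\partial^{-1}\,g$ modulo purely differential pieces, and Lemma~\ref{lema} is the right tool to push all derivatives through the $\partial^{-1}$ factors and bring both sides to a common canonical form. Matching the residue (the coefficient of $\partial^{-1}$) in the resulting identity produces a single local algebraic/differential relation between $\phi_1^{(0)}$, $\psi_1^{(0)}$ and the primitive $P := \int \phi_1^{(0)}\psi_1^{(0)}\,dx$.

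A short calculation, of the type already carried out in the scalar BKP analyses of \cite{JHEP,gtf}, shows that this local relation forces $P = \tfrac{1}{2}(\phi_1^{(0)})^2$ (the integration constant can be absorbed into the normalization of $\psi_1^{(0)}$). Differentiating in $x$ gives $\phi_1^{(0)}\psi_1^{(0)} = \phi_1^{(0)}\phi_{1,x}^{(0)}$, and dividing through by the invertible element $\phi_1^{(0)}$ yields the advertised identity $\psi_1^{(0)} = \phi_{1,x}^{(0)}$.

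The main obstacle will be handling the nonlocal object $\int \phi_1^{(0)}\psi_1^{(0)}$ that is hidden inside $\psi_1^{(1)}$: a naive expansion of each $T_D$ and $T_I$ in \eqref{opera} generates a growing pseudo-differential tail, and only after repeatedly applying Lemma~\ref{lema} together with the commutativity of $Z_N$ do the cross-terms cancel and a tractable identity emerge. As a sanity check on the final answer, once $\psi_1^{(0)} = \phi_{1,x}^{(0)}$ is imposed one finds $\psi_1^{(1)} = -\tfrac{1}{2}\phi_1^{(0)}$; the overall scalar $-\tfrac{1}{2}$ commutes with $\partial$, so $T_D(\psi_1^{(1)}) = T_D(\phi_1^{(0)})$ and $T_I(\psi_1^{(1)}) = T_I(\phi_1^{(0)})$, and \eqref{opera} collapses to a manifestly symmetric identity in a single function $\phi_1^{(0)}$.
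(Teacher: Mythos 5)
Your proposal follows essentially the same route as the paper: the paper derives the operator constraint \eqref{opera}, notes that commutativity of $Z_N$ reduces everything to the $N=1$ (scalar BKP) computation of \cite{JHEP,gtf}, and concludes $\psi^{(0)}_1=\phi^{(0)}_{1,x}$ --- which is exactly your strategy, with the residue-matching step (which yields $\int\phi^{(0)}_1\psi^{(0)}_1\propto(\phi^{(0)}_1)^2$ and hence $\psi^{(1)}_1\propto\phi^{(0)}_1$) filled in more explicitly than the paper bothers to do. The only point worth noting is that the coefficient comparison actually gives the quadratic relation $(\phi^{-1}\phi_x)^2=(\psi^{-1}\psi_x)^2$ for $\psi=\psi^{(1)}_1$, so you must discard the degenerate branch $\psi^{(1)}_1\propto(\phi^{(0)}_1)^{-1}$ (which forces $\phi^{(0)}_1\psi^{(0)}_1=0$) before landing on the stated conclusion.
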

The B-type reduction of $L$ guarantee that there exists at least one
solution ($\phi;\psi$ ) which satisfies eq.\eqref{reduc2}.
 In fact the above theorem can be generalized to the case of the $gl(N,\mathbb C)$-valued multicomponent BKP hierarchy which is not commutative.

The B type condition of the $gl(N,\mathbb C)$-valued multicomponent BKP hierarchy implies noncommutative matrices $\psi^{(0)}_1$ and $\phi^{(0)}_1$ have the following relation:
\begin{equation}\label{relation1}
((\phi^{(0)}_1)^T)^{-2}(\phi^{(0)}_1)^T_x\int(\phi^{(0)}_1)^T\psi^{(0)}_1-\psi^{(0)}_1\\
-((\phi^{(0)}_1)^T)^{-1}(\int(\phi^{(0)}_1)^T\psi^{(0)}_1)\phi^{(0)}_{1,x}(\phi^{(0)}_1)^{-1}
+(\phi^{(0)}_1)^T_x=0,
\end{equation}
where $T$ means the transpose of matrices.

The proof of the eq.\eqref{relation1} will be skipped here because the focus of this paper is about the CMBKP hierarchy. A thorough study on the  $gl(N,\mathbb C)$-valued multicomponent BKP hierarchy will be contained in another work of ours recently.

{\bf Remark:} From eq.\eqref{reduc2} to eq.\eqref{relation1}, one can see clearly the difference of the BKP systems from $Z_N$ to $gl(N,\mathbb C)$.

In order to keep the B type restriction of the Lax operator of the CMBKP hierarchy, we do iterations of the gauge transformation $W_n=T_{n+n}$. In particular,
\[
W_2=T_{2+2}=T_I(\psi^{(3)}_2)\circ T_D(\phi^{(2)}_2)\circ  T_I(\psi^{(1)}_1)\circ T_D(\phi^{(0)}_1)
,\]
\[
W_n=T_{n+n}=T_I(\psi^{(2n-1)}_n)\circ T_D(\phi^{(2n-2)}_n)\circ ... \circ T_I(\psi^{(1)}_1)\circ T_D(\phi^{(0)}_1)
,\]
where
$\psi^{(2n-1)}_i=-T_I((\phi^{(2n-2)}_n))\cdot(\psi^{(2n-2)}_i),\ \ \psi^{(i)}_n=(\phi^{(i)}_n)_x.$
It can be easily checked that $W_n\cdot\phi^{(0)}_i\mid_{i\leq n}=0,(W_n^{-1})^{\ast}\cdot(\psi^{(0)}_i)\mid_{i\leq n}=0$.
The relations $\psi^{(i)}_n=(\phi^{(i)}_n)_x,\ \ n=1,2...$ can keep the dressing procedures $W_n=T_{n+n},\ \ n=1,2...$ always preserving the B type condition of new Lax operators $L^{[n]}$. This is similar as the case of the BKP hierarchy in \cite{JHEP}.

We denote $t=(t_1,t_3,t_5,\dots)$ and introduce
the  $Z_N$-valued wave function as
\begin{align}\label{wavef}
w(t; z)=\Phi \cdot e^{\xi(t;z)},
\end{align}
where the  function $\xi$ is defined as $\xi(t;
z)=\sum_{k\in\Zop} t_k z^k$. It is easy to see
\[
L\,w(t;z)=z w(t;z),\ \ \frac{\d w}{\d t_{2n+1}}=L^{2n+1}_{+}w.
\]

The $Z_N$-valued tau function $\tau$ of the   CMBKP hierarchy can be defined in form of the wave functions as
\begin{align}\label{wtau}
w(t,z)=\frac{\tau(t-2[z^{-1}])}{\tau(t)}
e^{\xi(t;z)},
\end{align}
where $[z]=\left(z,z^3/3,z^5/5,\dots\right)$.

The generating functions of n-step $T_D$ and n-step $T_I$ are denoted as $(\phi_1,\dots,\phi_{n-1},\phi_n)$	
 and 	$(\psi_1,\dots,\psi_{n-1},\psi_n)$	
 in order respectively.
 The generating functions have the following B type constraint
 \begin{equation}\label{relation1w}
\psi_i=(\phi_i)_x.
\end{equation}
Using the above gauge transformation, we can derive the gauge transformation on the tau function of the CMBKP hierarchy as
\[\tau^{(n+n)}=GW_{n,n}(\psi_n,\psi_{n-1},\dots , \psi_1;\phi_1,\dots,\phi_{n-1},\phi_n)\tau,\]
where the generalized Wronskian $GW_{k,n}$ is defined as \cite{gtf}
\[&&GW_{k,n}(g_k,g_{k-1},\dots, g_1;f_1, f_2,\dots,f_n)\\
&&=\left|\begin{matrix}\int g_kf_1&\int g_kf_2&\int g_kf_3&\dots&\int g_kf_n\\
\int g_{k-1}f_1&\int g_{k-1}f_2&\int g_{k-1}f_3&\dots&\int g_{k-1}f_n\\
\vdots&\vdots&\vdots&\dots&\vdots\\
\int g_1f_1&\int g_1f_2&\int g_1f_3&\dots&\int g_1f_n\\
f_1&f_2&f_3&\dots&f_n\\
f_{1x}&f_{2x}&f_{3x}&\dots&f_{nx}\\
\vdots&\vdots&\vdots&\dots&\vdots\\
(f_1)^{(n-k-1)}&(f_2)^{(n-k-1)}&(f_3)^{(n-k-1)}&\dots&(f_n)^{(n-k-1)}
\end{matrix}\right|.\]
When $k=0$, the generalized Wronskian $GW_{0,n}$ will be reduced to the ordinary Wronskian.
Now, we will only give the first gauge transformation of the CMBKP hierarchy included in the following proposition.

\begin{proposition}
If the eigenfunction $\phi$ and the adjoint eigenfunction $\psi$ satisfy the eq.\eqref{rela},
the one-fold gauge transformation operator of the CMBKP hierarchy
\begin{equation}
W_1=(\psi^{(1)}_1)^{-1}\circ\partial^{-1}\circ\psi^{(1)}_1\circ\phi^{(0)}_1\circ\partial\circ(\phi^{(0)}_1)^{-1},
\end{equation}
satisfies $W_1\phi_1^{(0)}=0$ and $(W^{-1}_1)^{\ast} (\psi^{(0)}_1)=0$.
$W_1$ will generater new solutions $U_{i}^{[1]}$ from seed solutions
$U_{i}$. To see it clearly, here we only give the transformations of first two dynamic functions

\begin{equation}\label{newo}
U_{1}^{[1]}=U_1+2(\ln\phi^{(0)}_1 )_{xx}
,
\end{equation}
\begin{equation}\label{newt}
U_{2}^{[1]}=U_2+4(\ln\phi^{(0)}_1 )_{xx}(\ln\phi^{(0)}_1 )_{x} -2(\frac{\psi^{(0)}_{1,x}}{ \phi^{(0)}_1})_x
.
\end{equation}

\end{proposition}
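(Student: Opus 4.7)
The proof splits naturally into the two annihilation identities and the derivation of the explicit formulas \eqref{newo}--\eqref{newt}. For the annihilation, I would rewrite $W_1 = T_I(\psi^{(1)}_1)\circ T_D(\phi^{(0)}_1)$ and use the elementary identity $T_D(f)\cdot f = f\,\partial(f^{-1}f) = 0$ recorded in Section~3, giving $W_1\phi^{(0)}_1 = T_I(\psi^{(1)}_1)\cdot 0 = 0$. For the adjoint identity I would invoke the relations $(T_D^{-1})^{\ast} = -T_I$ and $(T_I^{-1})^{\ast} = -T_D$ of Section~3 to write $(W_1^{-1})^{\ast} = T_D(\psi^{(1)}_1)\circ T_I(\phi^{(0)}_1)$; the definition $\psi^{(1)}_1 = -T_I(\phi^{(0)}_1)\psi^{(0)}_1$ of the first iterative step then gives $T_I(\phi^{(0)}_1)\psi^{(0)}_1 = -\psi^{(1)}_1$, so $(W_1^{-1})^{\ast}\psi^{(0)}_1 = -T_D(\psi^{(1)}_1)\psi^{(1)}_1 = 0$ by the same annihilation identity.

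For the explicit formulas, I would first use Theorem~1 to substitute $\psi^{(0)}_1 = \phi^{(0)}_{1,x}$ into $\psi^{(1)}_1 = -T_I(\phi^{(0)}_1)\psi^{(0)}_1$ and perform the antiderivative, obtaining $\psi^{(1)}_1 = -\tfrac{1}{2}\phi^{(0)}_1$. Scalar factors commute through $T_I$, so this collapses $W_1$ to $T_I(\phi^{(0)}_1)\circ T_D(\phi^{(0)}_1)$; integration by parts inside the pseudo-differential algebra then yields the compact form
\begin{equation*}
W_1 \;=\; 1 - 2\,(\phi^{(0)}_1)^{-1}\,\partial^{-1}\,\phi^{(0)}_{1,x}.
\end{equation*}
Expanding $\partial^{-1}\phi^{(0)}_{1,x}$ by the Leibniz rule gives $W_1 = 1 + w_1\partial^{-1} + w_2\partial^{-2} + O(\partial^{-3})$ with $w_1 = -2(\ln\phi^{(0)}_1)_x$ and $w_2 = 2\,\phi^{(0)}_{1,xx}/\phi^{(0)}_1$. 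Using $\Phi^{[1]} = W_1\Phi$ and the standard identities $U_1 = -a_{1,x}$, $U_2 = -a_{2,x} + a_{1,x}a_1$ read off from $L = \Phi\partial\Phi^{-1}$, the new Lax coefficients satisfy $U_1^{[1]} = U_1 - w_{1,x}$ and $U_2^{[1]} = U_2 - w_{2,x} + w_1 w_{1,x}$. Substituting $w_1, w_2$ produces \eqref{newo} immediately; for \eqref{newt} one rewrites $\phi^{(0)}_{1,xx}/\phi^{(0)}_1 = \psi^{(0)}_{1,x}/\phi^{(0)}_1$ (using the B-type relation once more) and performs the one-line algebraic check that $-w_{2,x} + w_1 w_{1,x}$ agrees with $4(\ln\phi^{(0)}_1)_{xx}(\ln\phi^{(0)}_1)_x - 2(\psi^{(0)}_{1,x}/\phi^{(0)}_1)_x$.

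The main obstacle is the pseudo-differential bookkeeping in expanding $W_1$ to order $\partial^{-2}$: several Leibniz cross-terms appear, and the commutativity of coefficients in the $Z_N$ subalgebra is used repeatedly to collapse them. The same commutativity is what enables the scalar-factor cancellation $\psi^{(1)}_1 \propto \phi^{(0)}_1$ in the first place; in the noncommutative $gl(N,\mathbb{C})$ setting this simplification fails and one must deal instead with the intricate constraint \eqref{relation1}. Beyond these manipulations, the proof relies only on Theorem~1 and the operator identities recorded earlier in Section~3.
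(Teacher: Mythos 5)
Your proposal is correct, and in fact it supplies an argument where the paper gives none: the proposition is stated in Section~3 without any accompanying proof, only followed by the explicit $N=2$ specialization. Your two annihilation arguments ($T_D(f)\cdot f=0$ together with $(W_1^{-1})^{\ast}=T_D(\psi^{(1)}_1)\circ T_I(\phi^{(0)}_1)$ and $T_I(\phi^{(0)}_1)\cdot\psi^{(0)}_1=-\psi^{(1)}_1$) are exactly the identities the paper sets up for this purpose. The computational core also checks out: with $\psi^{(0)}_1=\phi^{(0)}_{1,x}$ one gets $\psi^{(1)}_1=-\tfrac12\phi^{(0)}_1$, hence $W_1=T_I(\phi^{(0)}_1)\circ T_D(\phi^{(0)}_1)=1-2(\phi^{(0)}_1)^{-1}\partial^{-1}\circ\phi^{(0)}_{1,x}$, and expanding gives $w_1=-2(\ln\phi^{(0)}_1)_x$, $w_2=2\phi^{(0)}_{1,xx}/\phi^{(0)}_1$; combined with $U_1=-a_{1,x}$, $U_2=-a_{2,x}+a_1a_{1,x}$ and $a_1^{[1]}=a_1+w_1$, $a_2^{[1]}=a_2+w_2+w_1a_1$, the formulas \eqref{newo}--\eqref{newt} follow by the one-line algebra you describe. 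One small point worth making explicit: the collapse $T_I(\psi^{(1)}_1)=T_I(\phi^{(0)}_1)$ requires $\psi^{(1)}_1$ to be a \emph{constant} multiple of $\phi^{(0)}_1$, which holds only if the antiderivative is normalized as $\int\phi^{(0)}_1\phi^{(0)}_{1,x}=\tfrac12(\phi^{(0)}_1)^2$ with vanishing integration constant; a nonzero constant would produce $-\tfrac12\phi^{(0)}_1-C(\phi^{(0)}_1)^{-1}$ and break the cancellation, so you should state this choice. Your closing remark correctly identifies commutativity of $Z_N$ as the reason the simplification works here and fails in the $gl(N,\mathbb C)$ setting governed by \eqref{relation1}.
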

If we suppose $U_1=\begin{bmatrix}\alpha&0\\ \beta&\alpha\end{bmatrix}$,$U_2=\begin{bmatrix}\gamma&0\\ \eta&\gamma\end{bmatrix}$, $\phi^{(0)}_1 =\begin{bmatrix}\phi_0&0\\ \phi_1&\phi_0\end{bmatrix}$, then we can derive the explicit transformation as

\[
\alpha^{[1]}&=&\alpha+2(\ln\phi_{0} )_{xx},\ \
\\
 \beta^{[1]}&=&\beta+2(\frac{\phi_1}{\phi_0})_{xx},
\\
\gamma^{[1]}&=&\gamma+4(\ln\phi_{0} )_{xx}(\ln\phi_{0} )_{x} -2(\frac{\phi_{0xx}}{ \phi_{0}})_x,
\\
\eta^{[1]}&=&\eta+4(\ln\phi_{0})_{xx}(\frac{\phi_1}{\phi_0} )_{x}+4(\frac{\phi_1}{\phi_0})_{xx}(\ln\phi_{0})_{x} -2(\frac{\phi_{1xx}}{ \phi_{0}}-\frac{\phi_{0xx}\phi_1}{ \phi_{0}^2})_x.
\]
In the calculation, the identity $\ln{\begin{bmatrix}\phi_0&0\\ \phi_1&\phi_0\end{bmatrix}}=\begin{bmatrix}\ln{\phi_0}&0\\ \frac{\phi_1}{\phi_0}&\ln{\phi_0}\end{bmatrix}$
is used.

For the case $N=1$, $W_1$ will generater new solutions of the BKP hierarchy from seed solutions.

\section{Constrained CMBKP hierarchy}

In this section, we will consider the operator of the constrained CMBKP(cCMBKP) hierarchy as
\[L=\partial+u\partial^{-1} v_{x}-v\partial^{-1} u_{x},
\]
where $u$ and $v$ are $N\times N$ matrix functions taking values in $Z_N$.
Here $u,v$ should satisfy the following Sato equation
\[u_{t_{2n-1}}=B_{2n-1}\cdot u,\ v_{t_{2n-1}}=B_{2n-1}\cdot v.\]
Because the B type condition eq.\eqref{laxBcondition}, one can prove that the $\partial^0$ term does not exist in $B_{2n-1}$ as mentioned in \cite{bbkp}. That means that $u=v=1$ is a trivial solution.

Suppose $u=q+p\Gamma,v=r+s\Gamma$, and consider the case when $N=2$, i.e. \[u=\begin{bmatrix}q&0\\ p&q\end{bmatrix},\ \ v=\begin{bmatrix}r&0\\ s&r\end{bmatrix}.\]

Then we can further derive the following coupled equations
\[\notag q_{t_3}&=&q_{3x}+3(qr_{x}-rq_{x})q_x \\
\notag p_{t_3}&=&p_{3x}+3[(qr_{x}-rq_{x})p_x+(pr_x+qs_x-sq_x-rp_x)q_x]\\
\notag r_{t_3}&=&r_{3x}+3(qr_{x}-rq_{x})r_x\\
\notag s_{t_3}&=&s_{3x}+3[(qr_{x}-rq_{x})s_x+(pr_x+qs_x-sq_x-rp_x)r_x].\]

If $q=r, p=s$, we can derive the following trivial equations
\[\notag q_{t_3}&=&q_{3x}, \\
\notag p_{t_3}&=&p_{3x}.\]

If $r=s=1$, we can derive the following coupled matrix KdV-like equation
\[\notag q_{t_3}&=&q_{3x}-3q_{x}^2, \\
\notag p_{t_3}&=&p_{3x}-6q_{x}p_x-3q_x^2.\]

Similarly as \cite{gtf}, we can derive the new solutions generated from the seed solution $q,r$
\[u^{(n+n)} =\frac{GW_{n,n+1}(\psi^{(n-1)},\psi^{(n-2)},\dots , \psi^{(1)},u_x;u,\phi^{(1)},\dots,\phi^{(n-1)},\phi^{(n)})}
{GW_{n,n}(\psi^{(n-1)},\psi^{(n-2)},\dots , \psi^{(1)},u_x;u,\phi^{(1)},\dots,\phi^{(n-2)},\phi^{(n-1)})}
,\]

\[v^{(n+n)} =\frac{(-1)^nGW_{n-1,n}(\psi^{(n-2)},\psi^{(n-3)},\dots , \psi^{(1)},v_x;v,\phi^{(1)},\dots,\phi^{(n-2)},\phi^{(n-1)})}
{GW_{n,n}(\psi^{(n-1)},\psi^{(n-2)},\dots , \psi^{(1)},u_x;u,\phi^{(1)},\dots,\phi^{(n-2)},\phi^{(n-1)})},\]
where $\phi^{(j)}=L^ju$ and $(\phi^{(j)},\psi^{(j)})$ have the same relation as eq.\eqref{relation1w}.
Also the iteration on the constrained tau functions $\tau_c$ of the constrained CMBKP hierarchy as
\[\tau_c^{(n+n)}=GW_{n,n}(\psi^{(n-1)},\psi^{(n-2)},\dots , \psi^{(1)},u_x;u,\phi^{(1)},\dots,\phi^{(n-2)},\phi^{(n-1)})\tau_c.\]
In the above process of calculations, all the elements in above Wronskians must keep being always written in terms of $\Gamma$. In this way, one can keep the new solutions $u^{(n+n)}, v^{(n+n)} $ take values in the algebra $Z_N$.

\section{ Quantum torus constraint of CMBKP hierarchy}

In this section, we will focus on the quantum torus symmetry of the CMBKP hierarchy.
Firstly we define the   operator $\Gamma_B$ and the $Z_N$-valued  Orlov-Shulman's  operator $M$  as
 \begin{equation}
\Gamma_B=\sum_{i\in \Zop }it_i\partial^{i-1},\ \ M= \Phi \Gamma_B \Phi^{-1}.
\end{equation}
The Lax operator $L$ and the  $Z_N$-valued Orlov-Shulman's $M$ operator satisfy the following canonical relation
\[[L,M]=1.\]

With the above preparation, it is time to  construct additional symmetries for the  CMBKP hierarchy in the next part.
Then it is easy to get  that the operator $M$  satisfy

\begin{equation}
[L, M]=1,  \
M w(z)=\d_z w(z);
\end{equation}
\begin{equation}\label{bkpMt}
\frac{\d M}{\d t_k}=[(L^k)_+,M],\ \ k\in\Zop.
\end{equation}

Given any pair of integers $(m,n)$ with $m,n\ge0$, we will introduce the following  $Z_N$-valued operator $B_{m n}$
\begin{align}\label{defBoperator}
B_{m n}=M^mL^{n}-(-1)^{n} L^{n-1}M^{m}L.
\end{align}

For any $Z_N$-valued  operator $B_{m n}$ in \eqref{defBoperator}, one has
\begin{align}\label{Bflow}
&\frac{\d B_{m n}}{\d t_k}=[(L^k)_+, B_{m n}],  \ k\in\Zop.
\end{align}

Using
\[
 \Phi^*=\d\Phi^{-1} \d^{-1},\ \ \Gamma_B^*=\Gamma_B;\]
the    $Z_N$-valued operator $M$ satisfies the following identity,
\[\label{MBproperty}
M^*
=\d L^{-1}ML\d^{-1}.\]
 It is easy to check that the  $Z_N$-valued operator
$B_{m n}$  satisfy the B type condition, namely
\begin{equation}\label{btypeB}
B_{m n}^*=-\d  B_{m n} \d^{-1}.
\end{equation}

Now we will denote the operator $D_{m n}$ as
\begin{equation}
D_{m n}:=e^{mM}q^{nL}-L^{-1}q^{-nL}e^{mM}L.
\end{equation}

Using eq. \eqref{btypeB}, the B type  property of $D_{m n}$ can be derived as
\begin{eqnarray*}D_{m n}^*
&=&-\d  D_{m n} \d^{-1}.
\end{eqnarray*}
Therefore we get the following important B type condition which the $Z_N$-valued operator $D_{m n}$ satisfies
\begin{equation}
D_{m n}^*=-\d  D_{m n} \d^{-1}.
\end{equation}

Then basing on a quantum parameter $q$, the additional flows for the time variable $t_{m,n},t_{m,n}^*$ are
defined as follows
\begin{equation}
\dfrac{\partial \Phi}{\partial t_{m,n}}=-(B_{m n})_-\Phi,\ \
 \dfrac{\partial \Phi}{\partial t^*_{m,n}}=-(D_{m n})_-\Phi,
\end{equation}
or equivalently rewritten as

\begin{equation}
\dfrac{\partial L}{\partial t_{m,n}}=-[(B_{m n})_-,L], \qquad
\dfrac{\partial M}{\partial t^*_{m,n}}=-[(D_{m n})_-,M].
\end{equation}

 Generally, one can also derive
\begin{equation}\label{bkpMLK}
\partial_{t^*_{l,k}}(D_{m n})=[-(D_{l k})_-,D_{m n}].
\end{equation}

Using the similar proof  as the BKP hierarchy  in \cite{torus},
the additional flows of $\partial_{t^*_{l,k}}$ can be proved to be  symmetries of the  CMBKP hierarchy, i.e. they commute with all $\partial_{t_n}$ flows of the   CMBKP hierarchy.

The additional flows $\partial_{t_{l,k}}$ of the  CMBKP hierarchy form the
$W_{\infty}$ algebra similarly as \cite{tu} which is about the BKP hierarchy.

Now it is time to identity the algebraic structure of the
additional $t_{l,k}^*$ flows of the  CMBKP hierarchy.
\begin{theorem}\label{bkpalg}
The additional flows $\partial_{t_{l,k}^*}$ of the  CMBKP hierarchy form the
positive half of quantum torus algebra, i.e.,
\begin{equation}
[\partial_{t^*_{n,m}},\partial_{t^*_{l,k}}]=(q^{ml}-q^{nk})\partial_{t^*_{n+l,m+k}},\ \ n,m,l,k\geq 0.
\end{equation}

\end{theorem}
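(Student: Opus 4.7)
The strategy mirrors the scalar BKP treatment in~\cite{torus}; since $Z_N$ is commutative, the $Z_N$-valued adaptation is essentially notational.

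\textbf{Step 1 (flows to an operator bracket).} First I would apply $[\partial_{t^*_{n,m}},\partial_{t^*_{l,k}}]$ to the dressing operator $\Phi$ using the defining equation $\partial_{t^*_{n,m}}\Phi=-(D_{n,m})_-\Phi$ together with the consistency identity~\eqref{bkpMLK}. Splitting each $D_\ast=(D_\ast)_++(D_\ast)_-$ and noting that the negative part of $[\g_{c+},\g_{c+}]$ vanishes, a standard Zakharov--Shabat manipulation collapses the result to
\[
[\partial_{t^*_{n,m}},\partial_{t^*_{l,k}}]\Phi=[D_{n,m},D_{l,k}]_-\,\Phi,
\]
reducing the theorem to a single operator identity for $[D_{n,m},D_{l,k}]_-$ in $\g_{c-}$.

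\textbf{Step 2 (Weyl braiding from $[L,M]=1$).} The canonical relation $[L,M]=1$ gives $q^{bL}Mq^{-bL}=M+b\log q$, which exponentiates to the Weyl-type braiding
\[
q^{bL}\,e^{aM}=q^{ab}\,e^{aM}\,q^{bL}.
\]
Applied to the exponential ``heads'' $A_{n,m}:=e^{nM}q^{mL}$, this immediately yields the quantum torus bracket
\[
[A_{n,m},A_{l,k}]=(q^{ml}-q^{nk})\,A_{n+l,m+k},
\]
and the same braiding inside the ``tails'' $B_{n,m}:=L^{-1}q^{-mL}e^{nM}L$ (so that $D_{n,m}=A_{n,m}-B_{n,m}$) gives the parallel identity $[B_{n,m},B_{l,k}]=-(q^{ml}-q^{nk})\,B_{n+l,m+k}$.

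\textbf{Step 3 (assembling $D$; main obstacle).} Expanding $[D_{n,m},D_{l,k}]$ into four brackets, the two diagonal ones $[A,A]$ and $[B,B]$ are already in the desired form. For the two mixed brackets $[A_{n,m},B_{l,k}]$ and $[B_{n,m},A_{l,k}]$, which are not individually proportional to any $D_{\cdot,\cdot}$, I would invoke the auxiliary identity $e^{nM}L^{-1}=(L-n)^{-1}e^{nM}$ (also a consequence of $[L,M]=1$) together with the B-type adjoint relation $A_{n,m}^*=\partial\,B_{n,m}\,\partial^{-1}$ implicit in the derivation of eq.~\eqref{btypeB}. These let me rewrite each mixed bracket as a sum of further $D$-pieces plus a remainder lying in $\g_{c+}$; the remainder disappears under $(\cdot)_-$, and combining with the diagonal contributions and the overall sign from Step 1 yields exactly the quantum torus relation claimed by the theorem. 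The genuine difficulty is this mixed-bracket bookkeeping: the B-type involution has to convert tails into head-adjoints before the Step 2 braiding is applicable, and the cancellation of the $\g_{c+}$ remainder relies on precisely this adjoint symmetry inherited from the $Z_N$-valued B-type condition.
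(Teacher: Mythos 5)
First, a remark on the comparison itself: the paper does not actually prove Theorem~\ref{bkpalg} --- the statement is given with no proof, and the reader is implicitly referred to the scalar BKP argument of \cite{torus}. So your attempt cannot be checked against an argument in the text; it has to stand on its own. Your Step 1 is the standard Zakharov--Shabat reduction and is correct: one does obtain $[\partial_{t^*_{n,m}},\partial_{t^*_{l,k}}]\Phi=[D_{n,m},D_{l,k}]_-\Phi$. Your Step 2 braiding identities, $q^{bL}e^{aM}=q^{ab}e^{aM}q^{bL}$ and the resulting brackets of the heads and of the tails, are also correct (note only that your symbol $B_{n,m}$ for the tail collides with the paper's $B_{mn}=M^mL^n-(-1)^nL^{n-1}M^mL$).

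The gap is Step 3, and it is not bookkeeping --- it is the entire content of the theorem. (i) Sign: since $\partial_{t^*_{n+l,m+k}}\Phi=-(D_{n+l,m+k})_-\Phi$, Step 1 requires you to show $[D_{n,m},D_{l,k}]_-=-(q^{ml}-q^{nk})(D_{n+l,m+k})_-$, whereas your two diagonal brackets already produce $+(q^{ml}-q^{nk})(A_{n+l,m+k}-B_{n+l,m+k})=+(q^{ml}-q^{nk})D_{n+l,m+k}$; there is no ``overall sign from Step 1'' left to absorb this, so even granting everything else you land on $-(q^{ml}-q^{nk})\partial_{t^*_{n+l,m+k}}$, the opposite of the stated relation. (ii) The mixed brackets: using $e^{nM}L^{-1}=(L-n)^{-1}e^{nM}$ and $Le^{nM}=e^{nM}(L+n)$ one finds, for instance, $A_{n,m}B_{l,k}=q^{(m-k)l}(L-n)^{-1}e^{(n+l)M}q^{(m-k)L}L$, which carries the genuinely pseudodifferential factor $(L-n)^{-1}$ and is neither a differential operator nor a multiple of $A_{n+l,m+k}$ or of the tail $B_{n+l,m+k}$. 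Your claim that each mixed bracket splits into ``further $D$-pieces plus a remainder in $\g_{c+}$'' is therefore exactly the assertion that needs proof, and it is nowhere established: the adjoint relation $A_{n,m}^*=\partial B_{n,m}\partial^{-1}$ only shows that the sum of the two mixed brackets is invariant under $P\mapsto-\partial P^*\partial^{-1}$, not that its projection under $(\cdot)_-$ vanishes or is proportional to $(D_{n+l,m+k})_-$. Until that term is explicitly computed and disposed of, the argument is incomplete --- which is precisely the point the paper itself leaves unaddressed.
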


{\bf Remark:}
The $t^*_{l,k}$ additional flows constitute a nice quantum torus algebra because its basing on a commutative algebra. This is different from the multicomponent BKP whose additional symmetry constitute multi-fold quantum torus algebra \cite{dmBKPtorus}.

 Next, similar to the KP and BKP hierarchy \cite{torus},
we will consider the quantum torus constraint on the $Z_N$-valued  tau function of the CMBKP hierarchy.

Similar as  \cite{torus}, one has shown that
\begin{eqnarray}
\partial_{t_{p,s}}\log w=(e^{\tilde\eta}-1)\frac{\frac{Z_s^{(p+1)}}{p+1}(\tau)
}{\tau},
\end{eqnarray}
where \[\tilde\eta=\sum_{i\in \Zop}\frac{\lambda^{-i}}{i}\frac{\partial}{\partial t_i},\]
and
$Z_s^{(p+1)}$ is the generator of the  $W^B_{\infty}$ algebra.
Then with the help of rewriting  the quantum torus flow $\partial_{t^*_{l,k}}$ in terms of the $\partial_{t_{p,s}}$ flows
\begin{eqnarray*}
\partial_{t^*_{l,k}}&=&\sum_{p,s=0}^{\infty}\frac{l^p(k\log q)^s}{p!s!}\partial_{t_{p,s}},
\end{eqnarray*}
and denoting \[L^B_{l,k}:=\sum_{p,s=0}^{\infty}\frac{l^p(k\log q)^s}{p!s!}\frac{Z_s^{(p+1)}}{p+1},\]
 the quantum torus constraint on the $Z_N$-valued  wave function $w$, i.e.
\begin{eqnarray}
\partial_{t^*_{l,k}}w&=&0,
\end{eqnarray}
will lead to the quantum torus constraint on the $Z_N$-valued  tau function of the CMBKP hierarchy
\[L^B_{l,k}\tau=c,\] \label{constraintontauBKP}
where $c$ is a constant.

\section{Conclusions and Discussions}
In this paper, we defined a new multi-component BKP hierarchy which takes values in a commutative subalgebra  of $gl(N,\mathbb C)$. After this, we give the gauge transformation of the commutative multi-component BKP hierarchy. Meanwhile we construct a new constrained CMBKP hierarchy which contains some integrable systems including coupled matrix $KdV$ equations under a certain reduction. After this, the quantum torus symmetry and quantum torus constraint of the commutative multi-component BKP hierarchy are constructed. We are looking forward to the possible application of the quantum torus constraint in the topological field theory and enumerate geometry. For the importance of the BKP hierarchy in representation theory and mathematical physics, what is the application of the commutative multi-component BKP hierarchy in other theories such as Frobenius manifold is an interesting question.

{\bf {Acknowledgements:}}
  This work is supported by the National Natural Science Foundation of China under Grant No. 11201251, 11571192,
 the Zhejiang Provincial Natural Science Foundation under Grant No. LY15A010004, LY12A01007, the Natural Science Foundation of Ningbo under Grant No. 2015A610157 and the K. C. Wong Magna Fund in
Ningbo University.


\vskip20pt

\end{document}